\newtheorem{definition}{Definition}
\newtheorem{candidatedefinition}{Candidate Definition}
\newtheorem{proposition}{Proposition}
\newtheorem{theorem}{Theorem}
\begin{document}

\title{Dominated Actions in Imperfect-Information Games\thanks{Accepted for presentation at the 65th IEEE Conference on Decision and Control (CDC 2026).}}

\author{Sam Ganzfried\\
Ganzfried Research \\
sam.ganzfried@gmail.com
}

\date{\vspace{-5ex}}

\maketitle

\begin{abstract}
Dominance is a fundamental concept in game theory. In normal-form games dominated strategies can be identified in polynomial time. As a consequence, iterative removal of dominated strategies can be performed efficiently as a preprocessing step for reducing the size of a game before computing a Nash equilibrium. For imperfect-information games in extensive form, we could convert the game to normal form and then iteratively remove dominated strategies in the same way; however, this conversion may cause an exponential blowup in game size. In this paper we define and study the concept of dominated actions in imperfect-information games. Our main result is a polynomial-time algorithm for determining whether an action is dominated (strictly or weakly) by any mixed strategy in two-player perfect-recall games with publicly observable actions, which can be extended to iteratively remove dominated actions. This allows us to efficiently reduce the size of the game tree as a preprocessing step for Nash equilibrium computation. We explore the role of dominated actions empirically in ``All In or Fold'' No-Limit Texas Hold'em poker. 
\end{abstract}

\section{Introduction}
\label{se:intro}
In a normal-form game, (mixed) strategy $\sigma_i$ for player $i$ is \emph{strictly dominated} if there exists another (mixed) strategy $\sigma'_i$ such that $\sigma'_i$ performs strictly better than $\sigma_i$ regardless of the strategies used by the opponent(s): formally, if $u_i(\sigma'_i,s_{-i}) > u_i(\sigma_i,s_{-i)}$ for all pure strategy profiles (vectors of pure strategies) $s_{-i} \in S_{-i}$ for the opposing agents. (Note that the requirement that this holds for all opposing pure strategy profiles is enough to ensure that it also holds for all mixed strategy profiles of the opponents as well.) Clearly it would be irrational for an agent to play a strictly dominated strategy, as another strategy will do strictly better regardless of the beliefs of what the opponents would play. Strategy $\sigma'_i$ \emph{weakly dominates} $\sigma_i$ if the inequality holds weakly (though strictly for at least one pure strategy profile): formally, $u_i(\sigma'_i,s_{-i}) \geq u_i(\sigma_i,s_{-i})$ for all $s_{-i} \in S_{-i}$ where the inequality is strict for at least one $s_{-i}$. (The condition that at least one inequality is strict is simply to rule out saying that a strategy is dominated by an identical strategy). 

It seems natural to simplify a game by eliminating strategies that are dominated from the game to reduce its size and focus analysis on a smaller game. It can easily be shown that applying an iterative process of removing one dominated strategy for one player, then removing one for another (or possibly the same) player in the reduced game, etc., will ultimately result in a smaller game that contains a Nash equilibrium from the original game. For this reason, this procedure of \emph{iterated removal of dominated strategies} is often performed as a preprocessing step to reduce the size of a game before computing a Nash equilibrium (or other desired solution concept). As it turns out, all processes of iterated removal of strictly dominated strategies produce the same reduced game regardless of the order of elimination in finite games (while this is not necessarily the case for iterated removal of weakly dominated strategies)~\cite{Gilboa90:Order,Maschler13:Game}. While iterated removal of weakly dominated strategies can sometimes reduce the number of equilibria, it can never create new equilibria, and therefore even this procedure is very useful as a preprocessing step for Nash equilibrium computation.

There is a linear-time algorithm for determining whether a (mixed) strategy $\sigma_i$ is strictly dominated by any pure strategy for player $i$~\cite{Shoham09:Multiagent}. This algorithm simply iterates over each pure strategy $s_i$ for player $i$ and tests whether it performs strictly better than $\sigma_i$ against each opposing pure strategy profile $s_{-i}$. This procedure has complexity $O(|S|)$, where $S = \times_i S_i$ is the set of joint pure strategy profiles, and so takes time linear in the size of the game. The procedure can also be easily adapted to produce an algorithm for determining whether a mixed strategy profile is weakly dominated by any pure strategy in linear time. Note that it is possible for a strategy to be dominated by a mixed strategy and not be dominated by any pure strategy~\cite{Shoham09:Multiagent}. In order to determine whether a (mixed) strategy is strictly dominated by a mixed strategy, while the above procedure does not work, it turns out that there exists a linear programming formulation that runs in polynomial time, and there also exists a linear programming formulation that determines whether a (mixed) strategy is weakly dominated by a mixed strategy~\cite{Conitzer05:Complexity_of_iterated_dominance}. Therefore, regardless of whether the strategies being tested as dominated or dominating are mixed or pure, it can be checked in polynomial time.


\section{Extensive-form games}
\label{se:efg}
While the normal form can be used to model simultaneous moves, another representation, called the \emph{extensive form}, is generally preferred when modeling settings that have sequential moves. The extensive form can also model simultaneous moves, as well as chance events and imperfect information (i.e., situations where some information is available to only some of the agents and not to others). Extensive-form games consist primarily of a game tree; each non-terminal node has an associated player (possibly \emph{chance}) that makes the decision at that node, and each terminal node has associated utilities for the players. Additionally, game states are partitioned into \emph{information sets}, where the player whose turn it is to move cannot distinguish among the states in the same information set.  Therefore, in any given information set, a player must choose actions with the same distribution at each state contained in the information set.  If no player forgets information that they previously knew, we say that the game has \emph{perfect recall}. A game has \emph{publicly observable actions} if, for every player, no information set contains nodes reached by different actions of another player.\footnote{This assumption still allows private information and unobserved chance outcomes. It holds, for example, in poker and in many signaling and bargaining games with publicly observed actions, as well as open-bid auctions and security games with observable strategic actions.} A (behavioral) \emph{strategy} for player $i,$ $\sigma_i \in \Sigma_i,$ is a function that assigns a probability distribution over all actions at each information set belonging to $i$. In theory, every extensive-form game can be converted to an equivalent normal-form game; however, there is an exponential blowup in the size of the game representation, and therefore such a conversion is undesirable. Instead, algorithms have been developed that operate on the extensive form representation directly. It turns out that the complexity of computing equilibria in extensive-form games is similar to that of normal-form games; a Nash equilibrium can be computed in polynomial time in two-player zero-sum games (with perfect recall)~\cite{Koller94:Fast}, while the problem is hard for two-player general-sum and multiplayer games. 

One algorithm for computing an equilibrium in two-player zero-sum extensive-form games with perfect recall is based on solving a linear programming formulation~\cite{Koller94:Fast}. This formulation works by modeling each sequence of actions for each player as a variable, and is often called the \emph{sequence form LP} algorithm. Note that while the number of pure strategies is exponential in the size of the game tree, the number of action sequences is linear. The method uses several matrices defined as follows. For player 1, the matrix $\mathbf{E}$ is defined where each row corresponds to an information set (including an initial row for the ``empty'' information set), and each column corresponds to an action sequence (including an initial column for the ``empty'' action sequence). In the first row of $\mathbf{E}$ the first element is 1 and all other elements are 0; subsequent rows have -1 for the entries corresponding to the action sequence leading to the root of the information set, and 1 for all actions that can be taken at the information set (and 0 otherwise). Thus $\mathbf{E}$ has dimension $(c_1+1) \times (d_1+1)$, where $c_i$ is the number of information sets for player $i$ and $d_i$ is the number of action sequences for player $i$. Matrix $\mathbf{F}$ is defined analogously for player 2. The vector $\mathbf{e}$ is defined to be a column vector of length $c_1+1$ with 1 in the first position and 0 in other entries, and vector $\mathbf{f}$ is defined with length $c_2+1$ analogously. The matrix $\mathbf{A}$ is defined with dimension $(d_1+1) \times (d_2+1)$ where entry $A_{ij}$ gives the payoff for player 1 when player 1 plays action sequence $i$ and player 2 plays action sequence $j$ multiplied by the probabilities of chance moves along the path of play. The matrix $\mathbf{B}$ of player 2's expected payoffs is defined analogously. In zero-sum games $\mathbf{B} = -\mathbf{A}.$

Given these matrices we can solve one of two linear programming problems to compute a Nash equilibrium in zero-sum extensive-form games~\cite{Koller94:Fast}. In the first formulation the primal variables $\mathbf{x}$ correspond to player 1's strategy while the dual variables $\mathbf{q}$ correspond to player 2's strategy. In the second formulation, which is the dual problem of the first formulation, the primal decision variables $\mathbf{y}$ correspond to player 2's strategy while the dual variables $\mathbf{p}$ correspond to player 1's strategy.

\[
\begin{array}{rrl} 
&\max_{\mathbf{x},\mathbf{q}}& -\mathbf{q}^T \mathbf{f} \\ 
&\mbox{s.t.}& \mathbf{x}^T (-\mathbf{A}) - \mathbf{q}^T \mathbf{F} \leq \mathbf{0} \\
& & \mathbf{x}^T \mathbf{E}^T = \mathbf{e}^T \\
& & \mathbf{x} \geq \mathbf{0}\\
\end{array} 
\]

\[
\begin{array}{rrl} 
&\min_{\mathbf{y},\mathbf{p}}& \mathbf{e}^T \mathbf{p} \\ 
&\mbox{s.t.}& -\mathbf{A} \mathbf{y} + \mathbf{E}^T \mathbf{p} \geq \mathbf{0} \\
& & -\mathbf{F} \mathbf{y} = -\mathbf{f} \\
& & \mathbf{y} \geq \mathbf{0}\\
\end{array} 
\]

\section{Dominated actions}
\label{se:dominated-actions}
In extensive-form games, we can consider analogous concepts of strict, weak, and iterated dominance of strategies as for normal-form games. However, unlike in the normal-form setting, identification of a dominated extensive-form strategy does not necessarily allow us to reduce the size of the game, since it is possible that some of the actions played by the dominated strategy are also played by non-dominated strategies. In order to obtain the computational advantage of game size reduction, we must consider a stronger concept of \emph{dominated actions}. We first present several plausible candidate definitions for dominated actions which we demonstrate to be problematic. Our first candidate definition is given as Candidate Definition~\ref{cd:strict-strong}. This definition states that action $a_i$ for player $i$ at information set $I_i$ is strictly dominated by action $b_i$ at the same information set if every leaf node succeeding $a_i$ produces a strictly smaller payoff for player $i$ than every leaf node succeeding action $b_i.$ An analogous definition for weak dominance is given in Candidate Definition~\ref{cd:weak-strong}. 

\begin{candidatedefinition}
\label{cd:strict-strong}
If for every leaf node $N^{a_i}$ that follows action $a_i$ for player $i$ at information set $I_i$ and every leaf node $N^{b_i}$ that follows action $b_i$ for player $i$ at the same information set $I_i$, $u_i \left(N^{b_i}\right) > u_i \left(N^{a_i}\right),$ then $b_i$ \emph{strictly dominates} $a_i.$
\end{candidatedefinition}

\begin{candidatedefinition}
\label{cd:weak-strong}
If for every leaf node $N^{a_i}$ that follows action $a_i$ for player $i$ at information set $I_i$ and every leaf node $N^{b_i}$ that follows action $b_i$ for player $i$ at the same information set $I_i$, $u_i \left(N^{b_i}\right) \geq u_i \left(N^{a_i}\right)$ where inequality is strict for at least one node, then $b_i$ \emph{weakly dominates} $a_i.$
\end{candidatedefinition}

The problem with these definitions is that they are too strong; it is still possible for player 1 to strictly prefer action $b_i$ to $a_i$ regardless of the strategy used by player 2 even if Candidate Definition~\ref{cd:strict-strong} does not hold. This is illustrated in the following game depicted in Figure~\ref{fi:example}. In this game, chance makes an initial move taking each of two actions with probability $\frac{1}{2}.$ Then player 1 (red) selects one of two actions at a single information set. Player 2 (blue) then takes one of two actions after observing both the moves of chance and player 1. It is clear that action 2 for player 2 at their top information set and action 1 for player 2 at their bottom information set are both strictly dominated according to Candidate Definition~\ref{cd:strict-strong}. The smaller game obtained after removing these actions is depicted in Figure~\ref{fi:example-dom}. In the smaller game, the expected utility of playing action 1 is $0.5 (-100) + 0.5 (100) = 0$ and the expected utility of playing action 2 is $0.5 (-50) + 0.5 (-50) = -50.$ Since player 2 does not take any actions, player 1 always achieves a strictly higher expected utility by playing action 1. However, action 2 is not strictly dominated according to Candidate Definition~\ref{cd:strict-strong} because in one leaf node succeeding action 1 player 1 obtains payoff $-100$ which is lower than the payoff of $-50$ at leaf nodes following action 2.

\begin{figure*}[!ht]
\centering
\includegraphics[width=0.85\textwidth,height=\textheight,keepaspectratio]{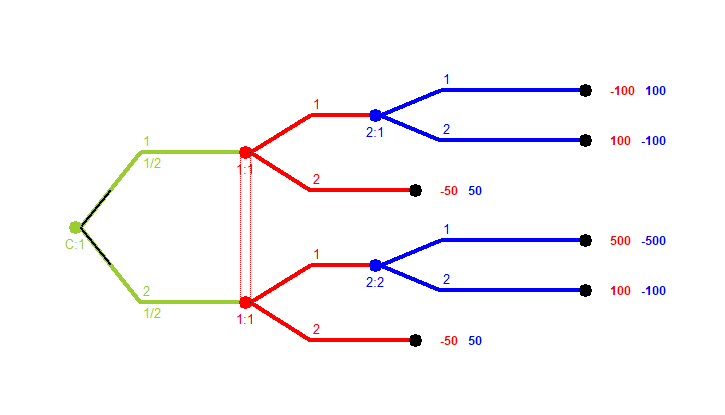}
\caption{Example two-player imperfect-information extensive-form game.}
\label{fi:example}
\end{figure*}

\begin{figure*}[!ht]
\centering
\includegraphics[width=0.85\textwidth,height=\textheight,keepaspectratio]{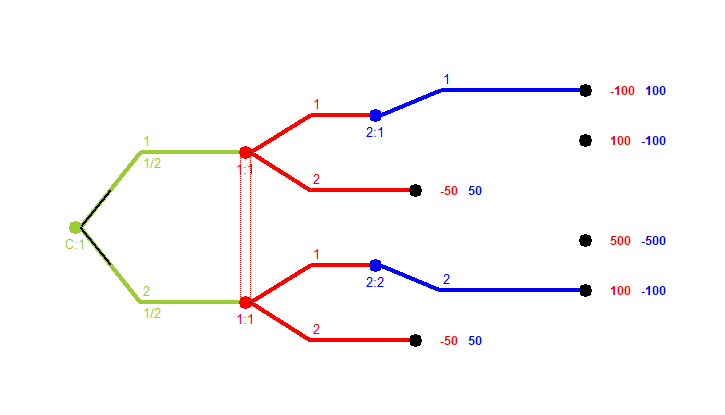}
\caption{Result of removing two dominated actions from game in Figure~\ref{fi:example}.}
\label{fi:example-dom}
\end{figure*}

Candidate Definitions~\ref{cd:strict-strong} and~\ref{cd:weak-strong} provide sufficient conditions for an action to be dominated, but we have demonstrated that it is clearly possible for actions that do not satisfy these definitions to also be dominated. Thus these conditions are too strong, and we will refer to strategies that satisfy them as being \emph{strongly dominated} (strictly or weakly, respectively). The concept of strong dominance is not without merit, as it can be verified very efficiently by simply iterating over the leaf nodes succeeding each action. Thus it may be useful to first remove actions that are strongly dominated as a preprocessing step before performing potentially more costly computations to remove other actions.

We next consider candidate definition given by Candidate Definition~\ref{cd:strict}, where $u_i$ denotes the expected utility accounting for randomized moves of chance as well as potential randomization in the players' strategies. This definition states that action $a_i$ for player $i$ at information set $I_i$ is strictly dominated (potentially by a probability distribution over other actions at the same information set), if there exists a strategy $\sigma^{-a_i}_i$ that never plays $a_i$ at $I_i$ that always has strictly higher expected utility than every strategy that plays $a_i$ at $I_i.$ Again this definition clearly provides a sufficient condition for $a_i$ to be dominated; however, the issue with this definition is that the strategies may potentially take actions early in the game tree that prevent the game from ever reaching information set $I_i.$ Consider the simple example game in Figure~\ref{fi:example-offtree}. Suppose we want to apply Candidate Definition~\ref{cd:strict} to determine whether action 2 is strictly dominated for player 2. Then $\sigma^{-a_i}_i$ is the strategy for player 2 that always plays action 1, and $\sigma^{a_i}_i$ must be the strategy for player 2 that plays action 2 with probability 1. Now suppose that player 1 plays action 2 with probability 1, which we denote as $\sigma_{-i}$. Then clearly $u_i\left(\sigma^{-a_i}_i, \sigma_{-i}\right) = u_i\left(\sigma^{a_i}_i, \sigma_{-i}\right) = 0,$ since both strategy profiles will result in reaching the bottom leaf node yielding payoff 0. 

\begin{candidatedefinition}
\label{cd:strict}
Action $a_i$ for player $i$ is \emph{strictly dominated} at information set $I_i$ if there exists a behavioral strategy $\sigma^{-a_i}_i$ that plays action $a_i$ at $I_i$ with probability 0 such that for every behavioral strategy $\sigma^{a_i}_i$ for player $i$ that plays action $a_i$ with probability 1 at $I_i$, $u_i\left(\sigma^{-a_i}_i, \sigma_{-i}\right) > u_i\left(\sigma^{a_i}_i, \sigma_{-i}\right)$ for all opposing strategies $\sigma_{-i} \in \Sigma_{-i}.$
\end{candidatedefinition}

\begin{figure*}[!ht]
\centering
\includegraphics[width=0.85\textwidth,height=\textheight,keepaspectratio]{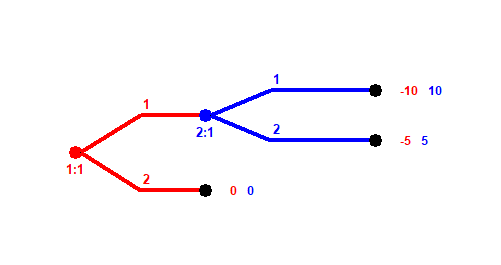}
\caption{Extensive-form game demonstrating problem with Candidate Definition~\ref{cd:strict}.}
\label{fi:example-offtree}
\end{figure*}

The problem with Candidate Definition~\ref{cd:strict} is that it allows the players to deviate from the path of play leading to the relevant information set $I_i.$ We address this limitation in our new definitions. \emph{Strictly-dominated actions} are defined in Definition~\ref{de:strict} and \emph{weakly-dominated actions} are defined in Definition~\ref{de:weak}. In these definitions $\Sigma^{I_i}_{-i}$ denotes the set of behavioral strategy profiles for the opponents that select only actions that do not prevent reaching $I_i$ whenever such actions exist. Note that these definitions apply to games with any number of players. They also consider actions that are dominated by any behavioral strategy (not necessarily just a pure action at $I_i$). It is easy to verify that these definitions address the issues that arose in the examples from Figure~\ref{fi:example} and Figure~\ref{fi:example-offtree}. We can apply these definitions repeatedly in succession to perform iterative removal of dominated actions, as for the normal form.

\begin{definition}
\label{de:strict}
Action $a_i$ for player $i$ is \emph{strictly dominated} at information set $I_i$ if there exists a behavioral strategy $\sigma^{-a_i}_i$ that always plays to get to $I_i$ and plays action $a_i$ at $I_i$ with probability 0 such that for every behavioral strategy $\sigma^{a_i}_i$ for player $i$ that always plays to get to $I_i$ and plays action $a_i$ with probability 1 at $I_i$, $u_i\left(\sigma^{-a_i}_i, \sigma_{-i}\right) > u_i\left(\sigma^{a_i}_i, \sigma_{-i}\right)$ for all opposing strategies $\sigma_{-i} \in \Sigma^{I_i}_{-i}.$
\end{definition}

\begin{definition}
\label{de:weak}
Action $a_i$ for player $i$ is \emph{weakly dominated} at information set $I_i$ if there exists a behavioral strategy $\sigma^{-a_i}_i$ that always plays to get to $I_i$ and plays action $a_i$ at $I_i$ with probability 0 such that for every behavioral strategy $\sigma^{a_i}_i$ for player $i$ that always plays to get to $I_i$ and plays action $a_i$ with probability 1 at $I_i$, $u_i\left(\sigma^{-a_i}_i, \sigma_{-i}\right) \geq u_i\left(\sigma^{a_i}_i, \sigma_{-i}\right)$ for all opposing strategies $\sigma_{-i} \in \Sigma^{I_i}_{-i}$, where the inequality is strict for at least one $\sigma_{-i}.$
\end{definition}

The example games considered were created using the open-source software package Gambit~\cite{Savani25a:Gambit}. Gambit has tools that allow the user to remove ``strictly dominated'' or ``strictly or weakly dominated'' actions, and these procedures can be repeated to iteratively remove multiple actions sequentially; however, there is no documentation regarding the algorithms applied or definitions of dominance used. In the example game from Figure~\ref{fi:example}, Gambit correctly identifies that action 2 for player 2 at their top information set and action 1 for player 2 at their bottom information set are both strictly dominated, and removes these to construct the smaller game in Figure~\ref{fi:example-dom}. However, as for strong domination, Gambit fails to recognize that action 2 is strictly dominated by action 1 for player 1 in the reduced game. This example demonstrates that Gambit's procedure does not remove all dominated actions (though it does not necessarily imply that Gambit is only removing strongly dominated actions).

\section{Algorithm for identifying dominated actions in two-player games}
\label{se:algorithm}
Suppose we want to determine whether an action $c$ for player 1 is dominated at information set $I$ in a two-player perfect-recall extensive-form game with publicly observable actions $G$. 
Using the sequence form representation, suppose that action $c$ is the final action in the sequence with index $i_c$ for player 1, and suppose that $i_p$ is the index of the parent sequence to sequence $i_c.$ Let $S_I$ denote the set of opponent sequences that appear on at least one path from the root to a node in information set $I.$ We restrict the opponent strategy vector so that only sequences in $S_I$ can receive positive probability. Each sequence corresponds to a component of the realization vector; we refer to the component index associated with sequences $s$ as its index. To consider whether action $c$ is dominated, we consider the following sequence-form optimization problem.

\begin{equation}\label{eq:full}
\begin{array}{rrl} 
v_1 = &\max_{\mathbf{x}_2}\min_{\mathbf{x}_1,\mathbf{y}}& \mathbf{x}^T_2 \mathbf{A} \mathbf{y} - \mathbf{x}^T_1 \mathbf{A} \mathbf{y} \\ 
&\mbox{s.t.}& \mathbf{x}_{1,i_c} = 1 \\
& & \mathbf{x}_{2,i_p} = 1\\
& & \mathbf{x}_{2,i_c} = 0\\
& & \mathbf{y}_{k} = 0 \mbox{ for all } k \notin S_I \\
& & \mathbf{x}^T_1 \mathbf{E}^T = \mathbf{e}^T \\
& & \mathbf{x}^T_2 \mathbf{E}^T = \mathbf{e}^T \\
& & \mathbf{y}^T \mathbf{F}^T = \mathbf{f}^T \\ 
& & \{\mathbf{x}_1,\mathbf{x}_2,\mathbf{y}\} \geq \mathbf{0} \\
\end{array} 
\end{equation}

Consider the following problem, where now player 2 controls two action sequences $\mathbf{y_1},\mathbf{y_2}$:

\begin{equation}\label{eq:full2}
\begin{array}{rrl} 
&v_2 = \max_{\mathbf{x}_2}\min_{\mathbf{x}_1,\mathbf{y}_1,\mathbf{y}_2}& \mathbf{x}^T_2 \mathbf{A} \mathbf{y_2} - \mathbf{x}^T_1 \mathbf{A} \mathbf{y_1} \\ 
&\mbox{s.t.}& \mathbf{x}_{1,i_c} = 1 \\
& & \mathbf{x}_{2,i_p} = 1\\
& & \mathbf{x}_{2,i_c} = 0\\
& & \mathbf{y}_{1,k} = 0 \mbox{ for all } k \notin S_I \\
& & \mathbf{y}_{2,k} = 0 \mbox{ for all } k \notin S_I \\
& & \mathbf{x}^T_1 \mathbf{E}^T = \mathbf{e}^T \\
& & \mathbf{x}^T_2 \mathbf{E}^T = \mathbf{e}^T \\
& & \mathbf{y}^T_1 \mathbf{F}^T = \mathbf{f}^T \\ 
& & \mathbf{y}^T_2 \mathbf{F}^T = \mathbf{f}^T \\ 
& & \{\mathbf{x}_1,\mathbf{x}_2,\mathbf{y_1},\mathbf{y_2}\} \geq \mathbf{0} \\
\end{array} 
\end{equation}

\begin{proposition}\label{pr:eq}
The optimal objective values in Problem~\ref{eq:full} and Problem~\ref{eq:full2} are the same.
\end{proposition}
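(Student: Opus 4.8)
The plan is to prove the two inequalities between the optimal objective values separately. The easy direction is that the optimum of Problem~\ref{eq:full} is at least that of Problem~\ref{eq:full2}: from the inner minimizer's point of view, Problem~\ref{eq:full2} is a relaxation of Problem~\ref{eq:full}, since any $(\mathbf{x}_1,\mathbf{y})$ feasible for the inner minimization of Problem~\ref{eq:full} yields a feasible point of the inner minimization of Problem~\ref{eq:full2} by taking $\mathbf{y}_1=\mathbf{y}_2=\mathbf{y}$, and then $\mathbf{x}_2^T\mathbf{A}\mathbf{y}_2-\mathbf{x}_1^T\mathbf{A}\mathbf{y}_1=\mathbf{x}_2^T\mathbf{A}\mathbf{y}-\mathbf{x}_1^T\mathbf{A}\mathbf{y}$. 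Hence for each fixed $\mathbf{x}_2$ the inner optimum of Problem~\ref{eq:full2} is no larger than that of Problem~\ref{eq:full}, and the outer maximization over $\mathbf{x}_2$ preserves this.

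For the reverse inequality it suffices to show that for every feasible $\mathbf{x}_2$ the inner optimum of Problem~\ref{eq:full} is at most that of Problem~\ref{eq:full2}. Fix such an $\mathbf{x}_2$ and let $(\mathbf{x}_1^{*},\mathbf{y}_1^{*},\mathbf{y}_2^{*})$ attain the inner minimum of Problem~\ref{eq:full2}. The first ingredient I would establish is a decomposition of the objective. Since $\mathbf{x}_1$ is forced to play to $I$ and then play $c$ with probability $1$, while $\mathbf{x}_2$ is forced to play to $I$ but plays $c$ with probability $0$, the vectors $\mathbf{x}_1$ and $\mathbf{x}_2$ coincide (and equal $1$) on the indices in $S_I$ and on all their prefixes. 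Consequently every leaf that is not a strict descendant of an action taken at $I$ enters $\mathbf{x}_1^T\mathbf{A}\mathbf{y}$ and $\mathbf{x}_2^T\mathbf{A}\mathbf{y}$ with the same realization weight and the same opponent weight, so such leaves cancel in $\mathbf{x}_2^T\mathbf{A}\mathbf{y}-\mathbf{x}_1^T\mathbf{A}\mathbf{y}$; what remains is that $\mathbf{x}_1^T\mathbf{A}\mathbf{y}$ collects only leaves following $c$ (it is $0$ below the other actions at $I$, since $\mathbf{x}_1$ plays $c$ with probability $1$) and $\mathbf{x}_2^T\mathbf{A}\mathbf{y}$ collects only leaves following the actions at $I$ other than $c$. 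In particular $\mathbf{x}_1^T\mathbf{A}\mathbf{y}$ depends on $\mathbf{y}$ only through its restriction to player-$2$ sequences that extend ``reach $I$, then $c$'', and $\mathbf{x}_2^T\mathbf{A}\mathbf{y}$ only through its restriction to the rest of the tree.

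The second ingredient is to splice $\mathbf{y}_1^{*}$ and $\mathbf{y}_2^{*}$ into a single realization plan $\mathbf{y}$ that agrees with $\mathbf{y}_1^{*}$ on every player-$2$ sequence extending ``reach $I$, then $c$'' and with $\mathbf{y}_2^{*}$ everywhere else; because both plans are pinned to $1$ on $S_I$, the splice is consistent on the portion of the tree leading into $I$. If $\mathbf{y}$ is feasible for Problem~\ref{eq:full}, then $(\mathbf{x}_1^{*},\mathbf{y})$ is feasible for its inner problem and, by the decomposition, $\mathbf{x}_1^{*T}\mathbf{A}\mathbf{y}=\mathbf{x}_1^{*T}\mathbf{A}\mathbf{y}_1^{*}$ and $\mathbf{x}_2^T\mathbf{A}\mathbf{y}=\mathbf{x}_2^T\mathbf{A}\mathbf{y}_2^{*}$, so the objective of $(\mathbf{x}_1^{*},\mathbf{y})$ equals the optimal value of the inner problem of Problem~\ref{eq:full2}. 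Hence the inner optimum of Problem~\ref{eq:full} is no larger than that of Problem~\ref{eq:full2}, and maximizing over $\mathbf{x}_2$ completes the proof.

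The step I expect to be the main obstacle is verifying that the spliced $\mathbf{y}$ is actually feasible for Problem~\ref{eq:full}, i.e., that it still satisfies $\mathbf{y}^T\mathbf{F}^T=\mathbf{f}^T$ (together with $\mathbf{y}\ge\mathbf{0}$ and $\mathbf{y}_k=1$ for $k\in S_I$). This reduces to showing that player $2$'s sequence-form polytope factors as a product over the subtrees hanging below the distinct actions at $I$ and the remainder of the tree, so that player $2$'s continuation after $I$ can be chosen independently on each of these pieces; the relevant point is that no information set of player $2$ contains nodes both below $c$ and below another action at $I$, which one argues from perfect recall and the block structure of $\mathbf{F}$ (each row of $\mathbf{F}$ couples only an information set with its own parent sequence and its own actions). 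Once this feasibility is secured, the remaining pieces — the cancellation of the leaves not below $I$ and the two ``depends only on'' statements — are routine bookkeeping over which leaves follow which action at $I$.
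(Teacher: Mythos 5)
Your overall strategy is the same as the paper's: one inequality via the embedding $\mathbf{y}_1=\mathbf{y}_2=\mathbf{y}$, the other via splicing $\mathbf{y}_1^{*}$ and $\mathbf{y}_2^{*}$ into a single realization plan according to whether a player-2 sequence lies below $c$ or not. You are in fact more careful than the paper about the max--min structure: you argue inner-minimum by inner-minimum for each fixed $\mathbf{x}_2$ and attach each construction to the correct inequality (the embedding enlarges the inner feasible set, hence gives $f_1\geq f_2$; the splice gives the reverse), whereas the paper reads both max--min problems as plain optimizations. You are also right to single out the feasibility of the spliced $\mathbf{y}$ as the crux.

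That crux, however, is a genuine gap, and your proposed patch does not close it. The lemma you invoke --- that no information set of player 2 contains nodes both below $c$ and below another action at $I$ --- is not a consequence of perfect recall. Perfect recall constrains a player's memory of \emph{their own} past information and actions; it says nothing about whether player 2 observes player 1's choice at $I$. A minimal failure: player 1 moves at $I$ choosing $c$ or $b$, and player 2 then moves at a single information set containing both children. Then the very same player-2 sequence indexes leaves below $c$ and leaves below $b$, the partition of player 2's sequences you need does not exist, and the splice would have to assign two different values to one coordinate of $\mathbf{y}$. Worse, in this situation the proposition itself can fail: if the leaf payoffs below $c$ and below $b$ are the same (non-constant) function of player 2's action, the inner objective of Problem~\ref{eq:full} is identically zero, while in Problem~\ref{eq:full2} the minimizer can push $\mathbf{x}_1^T\mathbf{A}\mathbf{y}_1$ up and $\mathbf{x}_2^T\mathbf{A}\mathbf{y}_2$ down independently and reach a strictly negative value. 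So no bookkeeping can rescue the step; what is needed is an explicit structural hypothesis (every opponent information set reachable after $I$ lies entirely below a single action at $I$ --- true in the paper's poker application, where the all-in/fold decision is observed before the opponent acts, but false in general). For what it is worth, the paper's own proof performs the identical splice without comment, so this is a defect you have inherited and made visible rather than introduced. A secondary soft spot is your cancellation claim for leaves not below $I$: the constraints force $\mathbf{x}_1$ and $\mathbf{x}_2$ to agree only on $S_I$ and at index $i$, not at player-1 information sets off the path to $I$, so the asserted equality of realization weights at such leaves also needs justification (again, the paper asserts the same thing).
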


\begin{proof}
For a fixed feasible $\mathbf{x}_2$, let
\[
\phi_1(\mathbf{x}_2)
=
\min_{\mathbf{x}_1,\mathbf{y}}
\left(\mathbf{x}_2^{T}A\mathbf{y}-\mathbf{x}_1^{T}A\mathbf{y}\right)
\]
denote the inner objective value of Problem~1, and let
\[
\phi_2(\mathbf{x}_2)
=
\min_{\mathbf{x}_1,\mathbf{y}_1,\mathbf{y}_2}
\left(\mathbf{x}_2^{T}A\mathbf{y}_2-\mathbf{x}_1^{T}A\mathbf{y}_1\right)
\]
denote the inner objective value of Problem~2.

First observe that $\phi_2(\mathbf{x}_2)\le \phi_1(\mathbf{x}_2)$ for every feasible $\mathbf{x}_2$, since any feasible pair $(\mathbf{x}_1,\mathbf{y})$ for Problem~1 yields a feasible triple $(\mathbf{x}_1,\mathbf{y}_1,\mathbf{y}_2)$ for Problem~2 by setting $\mathbf{y}_1=\mathbf{y}_2=\mathbf{y}$.

Now fix a feasible $\mathbf{x}_2$ and let $(\mathbf{x}_1,\mathbf{y}_1,\mathbf{y}_2)$ be any feasible solution of the inner minimization in Problem~2.
Define a strategy $\mathbf{y}$ that follows $\mathbf{y}_1$ at all information sets reachable after player~1 takes action $c$ at information set $I$, and follows $\mathbf{y}_2$ otherwise.

Here we use the assumption that actions are publicly observable. Because player~1's action at $I$ is publicly observed, no information set of player~2 can contain nodes reached both after action $c$ and after a different action at $I$. Therefore, the prescriptions of $\mathbf{y}_1$ and $\mathbf{y}_2$ are never assigned inconsistently at the same information set, so this stitching defines a valid behavioral strategy $\mathbf{y}$ for Problem~1.
Moreover,
\[
\mathbf{x}_1^{T}A\mathbf{y}=\mathbf{x}_1^{T}A\mathbf{y}_1,
\qquad
\mathbf{x}_2^{T}A\mathbf{y}=\mathbf{x}_2^{T}A\mathbf{y}_2,
\]
because $\mathbf{x}_1$ takes action $c$ at $I$ while $\mathbf{x}_2$ does not. Hence
\[
\mathbf{x}_2^{T}A\mathbf{y}-\mathbf{x}_1^{T}A\mathbf{y}
=
\mathbf{x}_2^{T}A\mathbf{y}_2-\mathbf{x}_1^{T}A\mathbf{y}_1.
\]

Since this holds for every feasible $(\mathbf{x}_1,\mathbf{y}_1,\mathbf{y}_2)$, we obtain $\phi_1(\mathbf{x}_2)\le \phi_2(\mathbf{x}_2)$. Thus $\phi_1(\mathbf{x}_2)=\phi_2(\mathbf{x}_2)$ for every feasible $\mathbf{x}_2$. Maximizing over $\mathbf{x}_2$ in the two problems yields $v_1=v_2$.
\end{proof}

Proposition~\ref{pr:eq} allows us to divide Problem~\ref{eq:full2} into the following two subproblems. If $v_2$ is the optimal objective value of 
Problem~\ref{eq:full2}, $v_3$ is the optimal objective of Problem~\ref{eq:sub1}, and $v_4$ is the optimal objective of Problem~\ref{eq:sub2},
then we have $v_2 = v_3 - v_4.$

\begin{equation}\label{eq:sub1}
\begin{array}{rrl} 
&v_3 = \max_{\mathbf{x}_2} \min_{\mathbf{y}}& \mathbf{x}^T_2 \mathbf{A} \mathbf{y}\\ 
&\mbox{s.t.}&  \mathbf{x}_{2,i_p} = 1 \\
& & \mathbf{x}_{2,i_c} = 0\\
& & \mathbf{y}_{k} = 0 \mbox{ for all } k \notin S_I \\
& & \mathbf{x}^T_2 \mathbf{E}^T = \mathbf{e}^T \\
& & \mathbf{y}^T \mathbf{F}^T = \mathbf{f}^T \\ 
& & \{\mathbf{x}_2,\mathbf{y}\} \geq \mathbf{0} \\
\end{array} 
\end{equation}

\begin{equation}\label{eq:sub2}
\begin{array}{rrl} 
&v_4 = \max_{\mathbf{x}_1, \mathbf{y}} & \mathbf{x}^T_1 \mathbf{A} \mathbf{y}\\ 
&\mbox{s.t.}&  \mathbf{x}_{1,i_c} = 1 \\
& & \mathbf{y}_{k} = 0 \mbox{ for all } k \notin S_I \\
& & \mathbf{x}^T_1 \mathbf{E}^T = \mathbf{e}^T \\
& & \mathbf{y}^T \mathbf{F}^T = \mathbf{f}^T \\ 
& & \{\mathbf{x}_1,\mathbf{y}\} \geq \mathbf{0} \\
\end{array} 
\end{equation}
 
Let us first look at Problem~\ref{eq:sub1} and consider the inner subproblem for a fixed $\mathbf{x}_2$.

\[
\begin{array}{rrl} 
&\min_{\mathbf{y}}& \mathbf{x}^T_2 \mathbf{A} \mathbf{y}\\ 
&\mbox{s.t.}& \mathbf{y}_{k} = 0 \mbox{ for all } k \notin S_I \\
& & \mathbf{y}^T \mathbf{F}^T = \mathbf{f}^T \\ 
& & \mathbf{y} \geq \mathbf{0} \\
\end{array} 
\]

The Lagrangian is 
$$L(\mathbf{y}, \boldsymbol\lambda, \boldsymbol\gamma, \mathbf{r}) = \mathbf{x}^T_2 \mathbf{A}\mathbf{y} - (\mathbf{f}^T - \mathbf{y}^T \mathbf{F}^T)\boldsymbol\lambda  - \sum_{k \notin S_I}\boldsymbol\gamma_k\mathbf{y}_k - \mathbf{r}^T \mathbf{y}$$
$$\frac{\partial L}{\partial \mathbf{y}} =  \mathbf{A}^T \mathbf{x}_2 + \mathbf{F}^T \boldsymbol\lambda - \sum_{k \notin S_I}\boldsymbol\gamma_k \mathbf{e_k} - \mathbf{r}$$

The dual problem is

\[
\begin{array}{rrl} 
&\max_{\boldsymbol\lambda, \boldsymbol\gamma}& -\mathbf{f}^T \boldsymbol\lambda \\ 
&\mbox{s.t.}& \mathbf{A}^T \mathbf{x}_2 + \mathbf{F}^T \boldsymbol\lambda - \sum_{k \notin S_I}\boldsymbol\gamma_k \mathbf{e_k} \geq \mathbf{0} \\
\end{array} 
\]

So Problem~\ref{eq:sub1} is equivalent to:

\begin{equation}\label{eq:a2}
\begin{array}{rrl} 
&v_5 = \max_{\mathbf{x}_2, \boldsymbol\lambda, \boldsymbol\gamma}& -\mathbf{f}^T \boldsymbol\lambda\\ 
&\mbox{s.t.}& \mathbf{A}^T \mathbf{x}_2 + \mathbf{F}^T \boldsymbol\lambda - \sum_{k \notin S_I}\boldsymbol\gamma_k \mathbf{e_k} \geq \mathbf{0} \\
& & \mathbf{x}_{2,i_p} = 1 \\
& & \mathbf{x}_{2,i_c} = 0\\
& & \mathbf{x}^T_2 \mathbf{E}^T = \mathbf{e}^T \\
& & \mathbf{x}_2 \geq \mathbf{0} \\
\end{array} 
\end{equation}

Next consider Problem~\ref{eq:sub2}. Note that both players are aligned in trying to maximize the objective. Let us define a new problem $G'$ where player 1 selects the actions for both players. We denote player 1's strategy in this modified game by $\mathbf{x}'_1.$ We modify $\mathbf{E}$ and resize $\mathbf{e}$ accordingly and denote them by $\mathbf{E'}$ and $\mathbf{e'}.$ The payoffs can now be represented as a vector $\mathbf{a}'$. In the new representation let $i'_c$ denote the index of the sequence with concluding action $c$, and let $i'_p$ denote the index of the parent sequence. 

\begin{equation}\label{eq:a1}
\begin{array}{rrl} 
&v_6 = \max_{\mathbf{x}'_1} & \mathbf{x}'^T_1 \mathbf{a}'\\ 
&\mbox{s.t.}& \mathbf{x}'_{1,i'_c} = 1 \\
& & \mathbf{x}'^T_1 \mathbf{E'}^T = \mathbf{e'}^T \\
& & \mathbf{x}'_1 \geq \mathbf{0} \\
\end{array} 
\end{equation}

Note that $G'$ is not treated as a two-player game and no equilibrium computation is performed. Instead, player 1 controls all decision nodes and the problem becomes a single-agent optimization problem over the game tree. The sequence-form constraints $E'$ simply enforce valid realization plans in the tree and do not rely on perfect recall. Therefore the formulation remains valid even though merging the players may introduce imperfect recall. The same interpretation applies to Problems~\ref{eq:a2-weak} and \ref{eq:a1-weak}.

Let $v_5$ denote the optimal objective value for optimization problem~\ref{eq:a2}, and $v_6$ denote the optimal objective value for problem~\ref{eq:a1}. If $v_5 > v_6$ then we conclude that action $c$ is strictly dominated. If $v_5 < v_6$ then we conclude that action $c$ is not strictly or weakly dominated. If $v_5 = v_6$, let $v_7$ denote the optimal objective value for problem~\ref{eq:a2-weak}, and let $v_8$ denote the optimal objective value for problem~\ref{eq:a1-weak}. If $v_7 > v_8$ then we conclude that action $c$ is weakly dominated, and if $v_7 = v_8$ we conclude that action $c$ is not strictly or weakly dominated (note that we cannot have $v_7 < v_8$).

\begin{equation}\label{eq:a2-weak}
\begin{array}{rrl} 
&v_7 = \max_{\mathbf{x}'_2, \boldsymbol\lambda, \boldsymbol\gamma}& \mathbf{x}'^T_2 \mathbf{a}'\\
&\mbox{s.t.}& -\mathbf{f}^T \boldsymbol\lambda = v_5\\
& & \mathbf{A}^T \mathbf{x}'_2 + \mathbf{F}^T \boldsymbol\lambda - \sum_{k \notin S_I}\boldsymbol\gamma_k \mathbf{e_k} \geq \mathbf{0} \\
& & \mathbf{x}'_{2,i'_p} = 1 \\
& & \mathbf{x}'_{2,i'_c} = 0 \\
& & \mathbf{x}'^T_2 \mathbf{E'}^T = \mathbf{e'}^T \\
& & \mathbf{x}'_2 \geq \mathbf{0} \\
& & \mbox{The components of } \mathbf{x}'_2 \mbox{ for player 1 in } G' \mbox{ correspond to } \mathbf{x}_2 \mbox{ in } G.
\end{array} 
\end{equation}

\begin{equation}\label{eq:a1-weak}
\begin{array}{rrl} 
&v_8 = \min_{\mathbf{x}'_1} & \mathbf{x}'^T_1 \mathbf{a}'\\ 
&\mbox{s.t.}& \mathbf{x}'_{1,i'_c} = 1 \\
& &\mathbf{x}'^T_1 \mathbf{E'}^T = \mathbf{e'}^T \\
& &\mathbf{x}'_1 \geq \mathbf{0} \\
\end{array} 
\end{equation}

We can perform this procedure for every action at each information set of player 1, and analogously for player 2. The procedure is summarized in Algorithm~\ref{al:dom}. Since the number of actions is linear in the size of the game tree, the overall procedure involves solving a linear number of linear programs and therefore runs in polynomial time. We can repeat the procedure iteratively until no more actions are removed for any player. Thus, iterative removal of dominated actions can be performed in polynomial time. Note that the procedure applies to all two-player games and does not assume that they are zero sum. The procedure also removes actions that are dominated by any behavioral strategy (which may play a probability distribution over actions at the same information set $I$), not just actions that are dominated by a pure action.

\begin{algorithm}[!ht]
\caption{Dominance test for action $c$ at information set $I$}
\label{al:dom}
\begin{algorithmic}[1]
\STATE Construct the sequence-form matrices $(\mathbf{A},\mathbf{E},\mathbf{F})$ for game $G$
\STATE Let $i_c$ denote the sequence index ending in action $c$, and let $i_p$ denote the parent sequence
\STATE Compute $S_I$, the set of opponent sequences that reach information set $I$

\STATE Solve LP (5) and obtain value $v_5$
\STATE Solve LP (6) and obtain value $v_6$

\IF{$v_5 > v_6$}
    \RETURN action $c$ is \textbf{strictly dominated}
\ELSIF{$v_5 < v_6$}
    \RETURN action $c$ is \textbf{not dominated}
\ENDIF

\STATE Solve LP (7) and obtain value $v_7$
\STATE Solve LP (8) and obtain value $v_8$

\IF{$v_7 > v_8$}
    \RETURN action $c$ is \textbf{weakly dominated}
\ELSE
    \RETURN action $c$ is \textbf{not dominated}
\ENDIF
\end{algorithmic}
\end{algorithm}

\begin{theorem}
\label{th:strict-2}
There exists a polynomial-time algorithm that determines whether a given action is strictly dominated in a two-player perfect-recall extensive-form game of imperfect information with publicly observable actions.
\end{theorem}

\begin{theorem}
\label{th:weak-2}
There exists a polynomial-time algorithm that determines whether a given action is weakly dominated in a two-player perfect-recall extensive-form game of imperfect information with publicly observable actions.
\end{theorem}

\begin{theorem}
\label{th:iterated-2}
Iterated removal of strictly and weakly dominated actions can be performed in polynomial time in a two-player perfect-recall extensive-form game of imperfect information with publicly observable actions.
\end{theorem}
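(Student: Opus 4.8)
The plan is to show that the obvious iterative procedure terminates quickly: repeatedly locate a strictly or weakly dominated action using the decision procedures of the two preceding theorems, delete that action (together with the subtrees it leads to) to obtain a smaller game, and recurse. Since the statement only asks us to exhibit \emph{one} polynomial-time process --- not to argue that the reduced game is independent of the elimination order --- any fixed tie-breaking rule for choosing which dominated action to remove next is acceptable, and we never need the confluence subtleties mentioned in the introduction for weak dominance.

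First I would bound the number of rounds. Each round removes exactly one action at exactly one information set of one player, and the removal operation never introduces a new action, a new information set, or a new node. Hence the total number of removals across the whole run is at most the number of actions in the original game, which is linear in the size of the game tree; so there are at most linearly many rounds. Next I would bound the cost of a single round. In the current game there are at most linearly many (information set, action) pairs; for each one we invoke the strict-dominance test of the first theorem and, if it is inconclusive, the weak-dominance test of the second, each of which runs in time polynomial in the size of the current game --- which is never larger than the original --- since internally it performs the $n$-player-to-two-player merge and solves a constant number of polynomial-size linear programs. Thus one round costs polynomially many polynomial-time tests, and the whole procedure costs (linearly many rounds) $\times$ (polynomial per round), i.e.\ polynomial time.

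The step I expect to be the main obstacle is verifying the loop invariant: that after deleting a dominated action the object we are left with is still a legitimate $n$-player extensive-form game of imperfect information (with perfect recall) carrying a valid, polynomially-sized sequence-form representation, so that the preceding theorems may be re-applied. Concretely, when action $a$ at information set $I_i$ is removed I would delete $a$ from the action set at $I_i$ and delete the subtree rooted at the $a$-successor of every node in $I_i$. An information set of another player that met one of those subtrees simply loses the nodes lying inside it, and disappears entirely if it becomes empty; because we only ever delete nodes --- never merge information sets, never split one in a way that violates recall, and never relabel --- perfect recall is preserved, and the numbers of nodes, information sets, and action sequences can only decrease. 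Consequently the matrices $\mathbf{E},\mathbf{F},\mathbf{A}$ and vectors $\mathbf{e},\mathbf{f}$ of the reduced game are well defined, no larger than before, and recomputable in polynomial time; the only degenerate cases --- an information set left with a single action, or an information set rendered unreachable --- are harmless and may be kept or pruned without affecting the argument. With the invariant established, the two bounds of the previous paragraph combine to give the claimed polynomial-time algorithm for iterated removal of strictly and weakly dominated actions.
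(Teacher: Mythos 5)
Your proposal is correct and follows essentially the same route as the paper, whose argument for this theorem is the informal paragraph preceding the theorem statements: apply the polynomial-time per-action dominance tests to every action, note that the number of actions (and hence of possible removals) is linear in the game size, and iterate until no further action is removed. Your additional care in checking that deleting an action yields a smaller, still-valid perfect-recall extensive-form game with a well-defined sequence form is a worthwhile elaboration the paper leaves implicit, but it is not a different approach.
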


\section{Experiments}
\label{se:exp}
Now that we have defined dominated actions and showed that they can be computed efficiently, we would like to investigate whether they can be a useful concept in practice. Poker has been widely studied as a test domain for imperfect-information games. The most popular variant regularly played by humans is No-Limit Texas Hold'em (NLHE). Two-player NLHE works as follows. Initially two players each have a \emph{stack} of chips. One player, called the \emph{small blind}, initially puts $k$ worth of chips in the middle, while the other player, called the \emph{big blind}, puts in $2k$. The chips in the middle are known as the \emph{pot}, and will go to the winner of the hand. Next, there is an initial round of betting. The player whose turn it is to act can choose from three available options:
\begin{itemize}
\item \emph{Fold:} Give up on the hand, surrendering the pot to the opponent.
\item \emph{Call:} Put in the minimum number of chips needed to match the number of chips put into the pot by the opponent. For example, if the opponent has put in \$1000 and we have put in \$400, a call would require putting in \$600 more. A call of zero chips is also known as a \emph{check}.
\item \emph{Bet:} Put in additional chips beyond what is needed to call. A bet can be of any size from 1 chip up to the number of chips a player has left in their stack, provided it exceeds some minimum value and is a multiple of the smallest chip denomination. A bet of all of one's remaining chips is called an \emph{all-in} bet (aka a \emph{shove}). 
\end{itemize}

The initial round of betting ends if a player has folded, if there has been a bet and a call, or if both players have checked. If the round ends without a player folding, then three public cards are revealed face-up on the table (called the \emph{flop}) and a second round of betting takes place. Then one more public card is dealt (called the \emph{turn}) and a third round of betting, followed by a fifth public card (called the \emph{river}) and a final round of betting. If a player ever folds, the other player wins all the chips in the pot. If the final betting round is completed without a player folding (or if a player is all-in at an earlier round), then both players reveal their private cards, and the player with the best five-card hand (out of their two private cards and the five public cards) wins the pot (it is divided equally if there is a tie).

In some situations the blinds are very large relative to stack sizes. This can happen frequently at later stages in poker \emph{tournaments}, where the blinds increase after a certain time duration. A common rule is that when stack sizes are below around 8 big blinds a \emph{shove-or-fold} strategy should be employed where each player only goes all-in or folds~\cite{Mersenneary11:Correctly}. Study of optimal shove-or-fold strategies has been considered for 2-player~\cite{Miltersen07:Near-optimal} and 3-player~\cite{Ganzfried08:Computing,Ganzfried09:Computing} poker tournament endgames. The poker site Americas Cardroom\footnote{\url{https://www.americascardroom.eu/}} has specific ``All-in or Fold'' tables with up to 4 players where players are only allowed to play shove-or-fold strategies. The initial stack sizes at these tables are either 8 or 10 times the big blind; the highest stake available has blinds of \$100 and \$200 with initial stacks of \$2000. 

In the two-player NLHE shove-or-fold game, each player has 169 strategically distinct hands with which they can choose to shove or fold (13 pocket pairs and $\frac{13\cdot 12}{2} = 78$ combinations of each non-paired offsuit and suited hand). Let player 1 denote the small blind and player 2 denote the big blind. We assume that the blinds are small blind $k = 100$ and big blind $2k = 200$, and initial stacks are 1600 (8 times the big blind). We first remove all strictly dominated actions for player 1, followed by removing strictly dominated actions for player 2 (note that removing weakly dominated actions does not provide additional benefit in this game). It turns out that 85 actions for player 1 are removed and 99 actions for player 2 are removed. Thus, the initial game with 169 hands per player can be reduced to a game where player 1 must make a decision with 84 hands and player 2 must make a decision with 70 hands; the number of decision points has been reduced by over 50\%. It turns out that performing an additional round of removing dominated actions does not remove any further actions.\footnote{We note that no actions are strongly dominated for either player in this game (i.e., Candidate Definition~\ref{cd:strict-strong}). For example, consider the action of player 1 folding with pocket aces (the best starting hand). The payoff of this action is -\$100. If player 1 goes all-in, there are leaf nodes where player 1 will receive a payoff of -\$1600.}  

\begin{table}[!th]
\centering
\caption{Dominated actions for player 1 in two-player NLHE allin-fold with 5 big blind stacks. Suited hands are in the upper right and unsuited hands are in the lower left.}
\label{ta:5bb-p1}
\begin{tabular}{c||*{13}{c|}}
  &A &K &Q &J &T &9 &8 &7 &6 &5 &4 &3 &2 \\ \hline \hline
A &S(1) &S(1) &S(1) &S(1) &S(1) &S(1) &S(1) &S(1) &S(1) &S(1) &S(1) &S(1) &S(1)\\ \hline
K &S(1) &S(1) &S(1) &S(1) &S(1) &S(1) &S(1) &S(1) &S(1) &S(1) &S(1) &S(1) &S(1)\\ \hline
Q &S(1) &S(1) &S(1) &S(1) &S(1) &S(1) &S(1) &S(1) &S(1) &S(1) &S(1) &S(1) &S(1)\\ \hline
J &S(1) &S(1) &S(1) &S(1) &S(1) &S(1) &S(1) &S(1) &S(1) &S(1) &S(1) &S(1) &S(1)\\ \hline
T &S(1) &S(1) &S(1) &S(1) &S(1) &S(1) &S(1) &S(1) &S(1) &S(1) &S(1) &S(3) &?\\ \hline
9 &S(1) &S(1) &S(1) &S(1) &S(1) &S(1) &S(1) &S(1) &S(1) &S(3) &? &? &?\\ \hline
8 &S(1) &S(1) &S(1) &S(1) &S(1) &S(1) &S(1) &S(1) &S(2) &S(4) &? &? &?\\ \hline
7 &S(1) &S(1) &S(1) &S(1) &S(1) &S(3) &S(5) &S(1) &S(2) &S(3) &? &? &F(3)\\ \hline
6 &S(1) &S(1) &S(1) &S(1) &? &? &? &? &S(1) &S(3) &? &? &F(4)\\ \hline
5 &S(1) &S(1) &S(1) &S(1) &? &? &? &? &? &S(1) &S(4) &? &?\\ \hline
4 &S(1) &S(1) &S(1) &S(2) &F(4) &F(2) &F(2) &F(3) &F(4) &? &S(1) &? &F(4)\\ \hline
3 &S(1) &S(1) &S(1) &? &F(2) &F(2) &F(2) &F(2) &F(2) &F(2) &F(2) &S(1) &F(3)\\ \hline
2 &S(1) &S(1) &S(1) &F(4) &F(2) &F(2) &F(2) &F(2) &F(2) &F(2) &F(2) &F(2) &S(1)\\ \hline
\end{tabular}
\end{table}

\begin{table}[!ht]
\centering
\caption{Dominated actions for player 2 in two-player NLHE allin-fold with 5 big blind stacks. Suited hands are in the upper right and unsuited hands are in the lower left.}
\label{ta:5bb-p2}
\begin{tabular}{c||*{13}{c|}}
  &A &K &Q &J &T &9 &8 &7 &6 &5 &4 &3 &2 \\ \hline \hline
A &S(1) &S(1) &S(1) &S(1) &S(1) &S(1) &S(1) &S(1) &S(1) &S(1) &S(1) &S(1) &S(1)\\ \hline
K &S(1) &S(1) &S(1) &S(1) &S(1) &S(1) &S(1) &S(1) &S(1) &S(1) &S(1) &S(1) &S(1)\\ \hline
Q &S(1) &S(1) &S(1) &S(1) &S(1) &S(1) &S(1) &S(1) &S(1) &S(1) &S(1) &S(1) &S(1)\\ \hline
J &S(1) &S(1) &S(1) &S(1) &S(1) &S(1) &S(1) &S(1) &S(1) &S(1) &S(3) &S(3) &?\\ \hline
T &S(1) &S(1) &S(1) &S(1) &S(1) &S(1) &S(1) &S(1) &S(2) &? &? &? &?\\ \hline
9 &S(1) &S(1) &S(1) &S(1) &S(1) &S(1) &S(1) &S(1) &S(3) &? &F(4) &F(3) &F(2)\\ \hline
8 &S(1) &S(1) &S(1) &S(1) &S(1) &S(3) &S(1) &S(1) &S(1) &? &F(2) &F(2) &F(1)\\ \hline
7 &S(1) &S(1) &S(1) &S(2) &? &? &? &S(1) &S(1) &S(1) &F(2) &F(1) &F(1)\\ \hline
6 &S(1) &S(1) &S(1) &? &? &? &F(3) &F(2) &S(1) &S(1) &F(2) &F(1) &F(1)\\ \hline
5 &S(1) &S(1) &S(1) &? &F(4) &F(2) &F(2) &F(2) &F(1) &S(1) &S(1) &F(1) &F(1)\\ \hline
4 &S(1) &S(1) &S(1) &? &F(2) &F(2) &F(1) &F(1) &F(1) &F(1) &S(1) &F(1) &F(1)\\ \hline
3 &S(1) &S(1) &S(1) &? &F(2) &F(1) &F(1) &F(1) &F(1) &F(1) &F(1) &S(1) &F(1)\\ \hline
2 &S(1) &S(1) &S(1) &F(2) &F(2) &F(1) &F(1) &F(1) &F(1) &F(1) &F(1) &F(1) &S(1)\\ \hline
\end{tabular}
\end{table}

Next we consider the setting where the stacks are 5 times the big blind (i.e., 1000). In this game five rounds of iterated removal of dominated actions are needed. The first round removes 108 dominated actions for player 1 and 129 for player 2; the second round removes 20 for player 1 and 16 for player 2; the third round removes 8 for player 1 and 6 for player 2; the fourth round removes 7 for player 1 and 2 for player 2; the fifth round removes 1 for player 1 and 0 for player 2. In the final reduced game player 1 must make a decision with only 25 hands while player 2 must make a decision with only 16 hands. Tables~\ref{ta:5bb-p1} and~\ref{ta:5bb-p2} show the non-dominated actions for each player with parentheses indicating the iteration at which the alternative action was removed. `S' indicates shove, `F' indicates fold, and `?' indicates that neither action was dominated. If stacks are 4 times the big blind the game is solved completely after 4 rounds of removing dominated actions, and for stacks of 3 big blinds the game is solved completely after 2 rounds. These results demonstrate that iteratively removing dominated actions can significantly reduce the size of realistic games. While for simplicity we considered two-player zero-sum games, for which the full game can be solved directly by a linear program, the computational benefit for games with more than two players could be much more significant.

\section{Conclusion}
\label{se:conc}
Dominance is a fundamental concept in game theory. It is well-understood in normal-form games, but its impact in imperfect-information games has so far been unexplored. We consider several plausible definitions of dominated actions which we demonstrate to be problematic; however, one of them which we denote as \emph{strong domination} can be useful as an efficient preprocessing step. We present a new definition that addresses these limitations. In games with perfect recall and publicly observable actions we show that both strictly and weakly dominated actions can be identified in polynomial time, and that iterative removal can be performed in polynomial time in two-player games. Our algorithms identify actions that are dominated by any behavioral strategy, not necessarily a pure action. We demonstrate that removing dominated actions can play a significant role in reducing the size of realistic imperfect-information games. This can serve as an efficient preprocessing step before computation of a Nash equilibrium. Subsequent recent work has shown that removal of dominated actions enabled an algorithm to compute a Nash equilibrium in a three-player imperfect-information game in under 3 seconds while the algorithm was unable to solve the full game in 24 hours~\cite{Ganzfried26:Quadratic}. In practice our algorithms could be sped up by several heuristics such as selection of the order of traversal of information sets and players. Recent work has shown that some games contain many ``mistake'' actions that are played with probability zero in all Nash equilibria but are not dominated~\cite{Ganzfried19:Mistakes}. Thus, there is potentially more future ground to explore on efficient game reduction by elimination of poor actions. The most immediate open problems for future study are to determine the complexity of finding dominated actions in games in which the assumptions of publicly observable actions and/or perfect recall do not hold, as well as studying games with $n > 2$ players.

\bibliographystyle{plain}
\bibliography{C://FromBackup/Research/refs/dairefs}

\end{document}